\newtheorem{example}{Example}
\newtheorem{proposition}{Proposition}
\begin{document}

\author[]{Mariya Bessonov}
\address{Department of Mathematics, New York City College of Technology, CUNY} \email{mbessonov@citytech.cuny.edu}

\author[]{Dima Grigoriev}
\address{CNRS, Math\'ematiques, Universit\'e de Lille, 59655, Villeneuve d'Ascq, France}
\email{Dmitry.Grigoryev@univ-lille.fr}

\author[]{Vladimir Shpilrain}
\address{Department of Mathematics, The City  College  of New York, New York,
NY 10031} \email{shpil@groups.sci.ccny.cuny.edu}

\title{Probability theory and public-key cryptography}


\begin{abstract}
In this short note, we address a common misconception at the interface of probability theory and public-key cryptography.

\end{abstract}

\maketitle

\section{Introduction}

Suppose Alice sends a secret bit $c$ to Bob over an open channel, in the presence of a
computationally unbounded (passive) adversary Eve. Let $P_B$ and $P_E$ be the probabilities for Bob and Eve,
respectively, to correctly recover $c$. It is well known that if $P_B=1$, then a straightforward encryption emulation attack gives $P_E=1$ as well. In this note, we address a common misconception that this can be somehow ``generalized" to the case $P_B<1$, i.e., emulating encryption (or receiver's algorithms, or both) can give $P_E = P_B$. The basic reason why this is wrong is that Eve can never fully emulate Alice or Bob since Eve's probability space is inherently different from that of Alice or Bob. When $P_B=1$, this does not matter because ``always correct" in a probability space implies ``always correct" in any probability subspace. However, if $P_B<1$, the situation can be quite different.

In \cite{framework}, we have gone where no cryptographer had gone before  and suggested that it might be possible to build a public-key cryptographic protocol entirely based on probability theory, without using any algebra or number theory. This was met with skepticism (to put it mildly)  based on a strong belief in impossibility of having $P_B > P_E$. This skepticism has materialized in a preprint by Panny \cite{Panny} who courageously delved into the depths of elementary probability theory  and tried to actually compute some probabilities instead of just saying ``this is impossible because this cannot possibly be possible" as most other believers in ``flat Earth" do. His preprint is in two independent parts: theoretical, where he does probability computations attempting to prove $P_E \ge P_B$,  and (completely unrelated) experimental part where he offers a statistical attack on ciphertext in our protocol in \cite{framework} making Eve succeed (in recovering Alice's secret bit) with an unspecified probability $P_E > \frac{1}{2}$.
The fact that this probability (or, rather, an experimental approximation thereof) was not specified is unfortunate since it leaves open the question of whether or not this particular attack yields $P_E \ge P_B$ for the protocol in \cite{framework}.

The main purpose of this short note is to show (in Section \ref{whyfail}) that there is no ``generic" algorithm (like emulating encryption, or receiver's algorithms, or both) for Eve  to guarantee $P_E \ge P_B$. Of course, for any particular protocol, there might be an ``intelligent", protocol-specific attack, that might give $P_E \ge P_B$, but the question of whether or not there is {\it always} a  protocol-specific attack that succeeds with probability $P_E \ge P_B$ remains open.

\smallskip

\noindent For the record:

$\bullet$ We admit that in our scheme in \cite{framework}, $P_E > \frac{1}{2}$. We explain in Section \ref{how} below why and how Eve can achieve that.

$\bullet$ The $P_B > P_E$ claim for the general, ``framework", scheme in \cite{framework} still stands. In Section \ref{how}, we reproduce this framework and give some argument in support of this claim.

$\bullet$ In Section \ref{whyfail}, we give an explanation of why typical  ``proofs"  of $P_E = P_B$ (or even of  $P_E \ge P_B$) are flawed. Then we specifically address the decryption emulation attack, to answer a popular concern along the lines of ``if Eve is computationally unbounded, she can just emulate Bob and be at least as successful as Bob is in recovering Alice's plaintext". Here ``decryption emulation attack" is a slang for emulating all the receiver's algorithms used in a protocol.
\smallskip

Section \ref{how} also explains why in schemes like the one in \cite{framework}, $P_B$ inherently cannot be larger than 0.75. Regretfully, this probability appears to be not large enough to be useful in any meaningful real-life scenario, as far as we can see.

\smallskip

Finally, we encourage curious readers to read about the famous Monty Hall problem \cite{Monty}, to appreciate the importance of the probability space, and not just ``random coins", in computing probabilities. A quote from Wikipedia \cite{Monty}:
``Paul Erd\"os, one of the most prolific mathematicians in history, remained unconvinced until he was shown a computer simulation..." shows that non-believers in a (sometimes crucial) role of the probability space are in a good company.

We realize that firm believers in ``flat Earth" will not even read our note because it is much  easier to accuse of heresy than to search for the truth, but we hope that more open-minded readers will be curious enough to find out how  probability theory just a little bit beyond the first course in discrete mathematics can be used in cryptographic constructions.

\section{How $P_B$ can possibly be larger than $P_E$: a generic example} \label{how}

Let Alice be the sender of a secret bit $c$ and Bob the receiver. Suppose Alice has two disjoint probability spaces, $S_1$ and $S_2$, to pick her encryption key from. Assume, for simplicity of the analysis, that $S_1$ and $S_2$ are public (although they are typically not) and that Alice will select between $S_1$ and $S_2$ with probability $\frac{1}{2}$ (although this probability may be private as well).

Suppose that if Alice picks her encryption key from $S_1$, then Bob decrypts correctly with probability $q_1$, and if she picks her encryption key from $S_2$, then Bob decrypts correctly with probability $q_2$. Then Bob decrypts correctly with probability $\frac{1}{2}(q_1+q_2)$. Suppose $q_2>q_1$ and $q_2+q_1>1$. The latter condition implies that, in some instances, an encryption key from $S_1$ produces the same ciphertext as some encryption key  from $S_2$ does. Denote by $S_{12} \subseteq S_1$ the set of these ``special" encryption keys.

Let $\tau_1$ be the probability of the following event: Alice picked an encryption key from $S_{12}$,
conditioned on (Alice picked an encryption key from $S_1$ and Bob decrypted correctly). Why do we need this weird-looking condition? It is needed to express, in terms of $q_1$, the probability for Bob to decrypt correctly in case Alice picked an encryption key from $S_{12}$ (after choosing to pick it from $S_1$). Indeed, this probability is equal to $\tau_1 q_1$ by the probability of the intersection of two events formula. The two events here are (both conditioned on Alice having picked her encryption key from $S_1$):
(1) Alice picked her encryption key from $S_{12}$; (2) Bob decrypted  correctly.


How is Eve going to decrypt? The most obvious way is to narrow down the selection of decryption key (while emulating Bob's decryption algorithm) by assuming that Alice has picked her encryption key from
$S_2$ (since $q_2>q_1$ gives Bob a better chance for success in that case). Then, Eve would emulate Bob's algorithm in the hope that this will give her the correct decryption of Alice's bit with probability $q_2 > \frac{1}{2}(q_2+q_1)$. However, since Alice selects $S_2$ with probability $\frac{1}{2}$, the actual probability for Eve to decrypt Alice's bit correctly (if she uses this strategy) is $P_E = \frac{1}{2}q_2 + \frac{1}{2} \tau_1 q_1$. Here $q_2$ is the probability for Eve to decrypt correctly (by emulating Bob's randomness) in case Alice selected  $S_2$ to pick her encryption key from, and $\tau_1 q_1$ is the probability for Eve to decrypt correctly in case Alice selected  $S_1$ (see above). Then we have:
\smallskip

$\bullet$ The probability for Bob to decrypt Alice's bit correctly is $P_B = \frac{1}{2}(q_2+q_1)$.
\medskip

$\bullet$ The probability for Eve to decrypt Alice's bit correctly (if she uses the above strategy) is $P_E = \frac{1}{2}(q_2 + \tau_1 q_1)$. Thus, if $q_2 + \tau_1 q_1 > 1$, then $P_E > \frac{1}{2}$.
\medskip

$\bullet$  $P_E < P_B$. This is because $\tau_1 q_1 < q_1$. Indeed, obviously one cannot have $\tau_1 q_1 > q_1$, and $\tau_1 q_1 = q_1$, or  $\tau_1=1$, would  defy the purpose for Alice to have a separate $S_1$ in the first place.
\medskip

Thus, this most obvious attack does not give $P_E \ge P_B$ if Alice is able to select $S_1$ and $S_2$ such that $q_2>q_1$, $q_2+q_1>1$,  and $\tau_1 < 1$. An example of such selection was given in \cite{framework}. It is straightforward to see that other strategies (i.e., other probability distributions) for Eve to select between $S_1$ and $S_2$ for a supposed  encryption key will result in an even lower probability $P_E$ of success.

In particular instantiations of this general idea there might be instantiation-specific statistical attacks on Bob's public key or Alice's ciphertext \cite{Panny}, but the point we are trying to make here is that, contrary to what skeptics claim, there is no ``universal" (e.g. encryption/decryption emulation) attack on such a scheme that would guarantee $P_E \ge P_B$. We will establish this more formally in the next section.

To conclude this section, we note that we were unable to find an instantiation of this general scheme where  both $q_2$ and $q_1$ would be greater than $\frac{1}{2}$, so it appears that $P_B = \frac{1}{2}(q_2+q_1) < \frac{3}{4}$ in any instantiation of this scheme. In the instantiation offered in \cite{framework}, $P_B$ is approximately 0.55.


\section{Why all ``proofs" of $P_E = P_B$ fail} \label{whyfail}

Below is a short version of a typical ``proof" of $P_E = P_B$. 
In what follows, Alice is the sender of a secret plaintext $K_A$ and Bob the receiver who, upon decrypting Alice's ciphertext, obtains $K_B$ and wants $K_B=K_A$, with probability $P_B>\frac{1}{2}$. The adversary Eve wants to recover $K_A$, with probability $P_E \ge P_B$. Our main goal in this section is to show that {\it emulation attacks} (be it emulation of encryption, or decryption, or both) cannot give $P_E = P_B$ in any meaningful instantiation of the general scheme from  Section \ref{how}, including the one in \cite{framework}. First we briefly reproduce a typical claim, with a ``proof".

\begin{proposition} \label{emulation} Let $R_A$ be Alice's randomness, $R_B$ Bob's randomness, and $T$ the (public) transcript of communication. Suppose $R_A$ conditioned on $T$ and $R_B$ conditioned on $T$ are independent. Let $K_A$ be Alice's plaintext, $K_B$ the result of Bob's decryption, and $p=P_B$ the probability of having $K_B=K_A$ after the communication protocol execution. Then unbounded Eve, on input $T$, can generate a value $K_C$ such that $P_E = Pr(K_C = K_A) = p$.

\end{proposition}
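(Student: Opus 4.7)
The plan is to have Eve mount a \emph{posterior-sampling emulation} of Bob. Concretely, on input the transcript $T$, she uses her unbounded computational power to determine the conditional law of $R_B$ given $T$, draws a fresh sample $R_B'$ from this law (independently of everything else she knows), and then runs Bob's (public) decryption procedure on the pair $(T, R_B')$, outputting the result as her guess $K_C$.

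To verify that this strategy attains $P_E = p$, I would show that the joint conditional law of $(R_A, R_B')$ given $T$ coincides with the joint conditional law of $(R_A, R_B)$ given $T$. The marginal of $R_A$ given $T$ is untouched by Eve; and by construction the marginal of $R_B'$ given $T$ agrees with the marginal of $R_B$ given $T$. Moreover, both conditional joints factor: the original pair $(R_A, R_B)$ by the standing hypothesis of the proposition, and the emulated pair $(R_A, R_B')$ by Eve's fresh independent draw. Since a joint law on a product is determined by its marginals under conditional independence, the two joints are equal. Writing $K_A = f(R_A, T)$ and $K_B = g(R_B, T)$ as (deterministic) functions, so that $K_C = g(R_B', T)$ by construction, pushing the two equal joints forward under $(f,g)$ yields equal laws for $(K_A, K_B)$ and $(K_A, K_C)$; evaluating the diagonal event then gives
\[
P_E \;=\; Pr(K_C = K_A) \;=\; Pr(K_B = K_A) \;=\; p.
\]

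The probabilistic bookkeeping is routine once we grant the conditional-independence hypothesis, so the ``hard part'' is not the calculation but the hypothesis itself. In any genuine communication protocol the very purpose of the transcript $T$ is to \emph{couple} $R_A$ and $R_B$ through ciphertexts and public keys, and there is no reason to expect that conditioning on $T$ decouples them. This is exactly the tension the paper is setting up: an emulation ``proof'' of $P_E = P_B$ is only as strong as its independence assumption, and in instantiations like the one of Section~\ref{how} (and in \cite{framework}) that assumption is precisely what fails, which is why $P_B > P_E$ remains possible even against a computationally unbounded Eve who can replicate Bob's algorithm step for step.
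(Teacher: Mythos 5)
Your proposal is essentially the paper's own proof of Proposition~\ref{emulation}: sample $R_B'$ from the conditional law of $R_B$ given $T$, output $K_C = g(R_B', T)$, and argue that $(R_A, R_B, T)$ and $(R_A, R_B', T)$ are identically distributed, whence $Pr(K_C = K_A) = Pr(K_B = K_A) = p$. You do make explicit two points the paper leaves implicit --- that the identical joint distribution follows from matching marginals together with conditional independence on both sides, and that Eve's concrete procedure is posterior sampling of Bob's coins given $T$ --- which is worth flagging, since the paper itself immediately turns around and criticizes this very proof as incomplete precisely on the ground that no generation procedure for $K_C$ is specified, and your write-up (like the paper's) stands or falls with the conditional-independence and determinism hypotheses that the rest of the paper argues fail in the schemes of interest.
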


\begin{proof} Let $K_A = f(R_A, T)$ and $K_B = g(R_B, T)$.
Conditioned on $T$, Eve can sample Bob's coins. Let $R_B'$ denote Bob's randomness emulated by Eve.
Output the value $K_C = g(R_B', T)$, which is what Bob would output on input $(R_B', T)$.
The triples $(R_A, R_B, T)$ and $(R_A, R_B', T)$ are identically distributed. Hence the values of $p$ are identical.

\end{proof}

Below we point out some issues with this proof that show that the proof is, at the very least, incomplete if $p<1$. If $p=1$, the claim of the proposition is well known to be true, as established by a straightforward encryption emulation attack.


We note, in passing, that $R_A$ and $R_B$ include not only ``random coins", but also probability spaces. Random coins of Alice and Bob are, indeed, independent in any meaningful public-key communication model.
Probability space of the sender, on the other hand, can be dependent on the receiver's public key and therefore on his randomness; this happens even in some well-established schemes, e.g. in PollyCracker.
This is not a serious issue though, just something to keep in mind.
\medskip




\noindent {\bf Serious issue.} Assume, for the sake of argument, that the claim ``The triples $(R_A, R_B, T)$ and $(R_A, R_B', T)$ are identically distributed" in the above proof is correct under appropriate independence conditions. Even that, however, does not prove the claim of the proposition, which is: ``Then unbounded Eve, on input $T$, \underline{can generate} value $K_C$ such that $Pr(K_C = K_A) = Pr(K_B = K_A) = p$".  How can Eve do that? Assume for simplicity that $K_A$ is just a single bit.

What the above proof suggests is basically an ``encryption/decryption emulation attack". That is, Eve  generates all possible Alice's (plaintext, ciphertext) pairs and all possible Bob's decryption keys, with all possible randomness, that would match the public key and the protocol description. Then Eve selects all
(plaintext, ciphertext, decryption key) triples that give $K_B=K_A$. (Recall that $K_A$ is Alice's plaintext and $K_B$ is the result of decrypting Alice's ciphertext by Bob.) Some of these triples will have $K_B=K_A=0$, while others will have $K_B=K_A=1$.   Then what? Select a triple from this pool uniformly at random (or using whatever other distribution)? Then Eve's probability space will be very different from Bob's, and therefore there is no reason for $P_E$ to be equal to $P_B$ with this strategy.

Thus, the above proof is at least incomplete since it does not mention any algorithm for Eve to make that choice and \underline{actually generate} a value $K_C$ that would be equal to $K_A$ with probability $P_B$. \qed
\medskip

To be fair, our argument above only shows that there is no algorithm for Eve to achieve $P_E=P_B$. But what about $P_E\ge P_B$? To try to achieve this, the best strategy for Eve is probably to forget about Bob's algorithms, emulate just Alice's encryption algorithm, create a probability distribution on the set of all possible (plaintext, ciphertext) pairs, for all possible values of Bob's public key, and then, when given a ciphertext, select the plaintext that corresponds to it with higher probability.
This basically takes us to the situation considered in Section \ref{how}: this strategy will guarantee $P_E>\frac{1}{2}$, but $P_E\ge P_B$ is still questionable because Alice's (private) probability space is narrower than Eve's. To illustrate how this matters, here is a simple

\begin{example}\cite{Boy}
In a city where every family has two children, Alice and  Eve walk down the street and meet Bob with a little boy in a stroller; this boy is Bob's public key. Bob tells them that he has two children, but the older child (Bob's private key) is at school now, and Bob suggests that Alice and Eve try to guess whether the other child is a boy or a girl. While Eve walks away building a probability distribution on the set of all possible gender pairs, Alice finds out that the boy in the stroller was born on  a Tuesday. This did not give Alice any information about the other child's gender, but it changed Alice's probability space! Now it is not ``all  families with two children where the younger child is a boy" but, say, ``all  families with two children where the younger child is a boy and with a boy born on a Tuesday".

The result is: Eve comes to the conclusion that the other child is a boy with probability $\frac{1}{2}$ (because the pairs (GG), (BG), (GB), (BB) are assumed to be equally likely in a random family with two children, and the fact that the boy in the stroller is the younger child narrows it down to (GB), (BB)), whereas Alice comes to the conclusion (using Bayes' formula) that the other child is a boy with probability $\frac{13}{20}>\frac{1}{2}$.

\end{example}

One can say that in this example, Alice got information not available to Eve (even though this information is irrelevant to Bob's private key), and this seems to be prohibited by theoretical cryptography rules of engagement (a.k.a. Kerckhoffs's principles). However,  in actual cryptographic scenarios (including the one in \cite{framework}), Alice can ``artificially" change her own probability space to her liking. Eve, of course, is aware of all possible probability space choices by Alice, but all she can do is ``average out" their probability distributions, which will almost for sure result in different probability distributions on the set of (plaintext, ciphertext) pairs for Alice and for Eve; sometimes it may even reverse the preference of one plaintext (given a ciphertext) over another.
This phenomenon is called {\it Simpson's paradox} \cite{stats}:
{\it a trend can appear in several different groups of data but disappear or reverse when these groups are combined}. In reference to the above example, Alice could use any information including information available also to Eve (e.g. the boy in the stroller is blonde) to (privately) narrow down her probability space, while Eve will not know which probability space Alice has chosen. Compare this to the general scheme in Section \ref{how}.

\medskip

Finally, we consider the  attack where Eve emulates just Bob (the receiver). If the probability distribution used by Bob to generate his public key is known to the public (which was the case in \cite{framework}), then decryption emulation attack may seem like a reasonable strategy for Eve, i.e., Eve can generate all possible Bob's private keys, then generate all possible Bob's public keys corresponding to each of his private keys, and then select all (private key, public key) pairs with public key matching the one actually published by Bob. This will yield a probability distribution (conditioned on $T$) on the set of all possible Bob's private keys, and Eve can select one of the private keys that occurs with highest probability in this distribution. The  probability $P_E=Pr(K_C=K_B)$ might then be larger than $\frac{1}{2}$, but this probability has little to do with $P_B=Pr(K_B=K_A)=p$ since the latter probability is largely controlled by Alice. Below we show that if Eve achieves $P_E=Pr(K_C=K_B)>\frac{1}{2}$, then, in fact, $P_E<P_B$ provided $P_B>\frac{1}{2}$.

Emulating Bob will result in the following success probability $P_E$ for Eve to recover $K_A$ in the case where $K_A$ is a single bit (assuming that $K_B$ and $K_C$, too, can only take  values 0 or 1):

$$P_E = Pr(K_C=K_A) = Pr(K_C=K_B) \cdot Pr(K_A=K_B) + Pr(K_C \ne K_B) \cdot Pr(K_A \ne K_B).$$

\noindent All probabilities here are conditioned on $T$. Also, we assume that, since Eve emulates just Bob, Eve's and Alice's randomness are independent
(conditioned on $T$), hence the events $K_C=K_B$ and $K_A=K_B$ (conditioned on $T$) are independent.

 Denote $Pr(K_B=K_A)=p, ~Pr(K_C=K_B)=\sigma$.  Then we have:

$$P_E = Pr(K_C=K_A) = \sigma p + (1-\sigma)(1-p) = \frac{1}{2}(2\sigma - 1)(2p-1) + \frac{1}{2}.$$

If $p>\frac{1}{2}$ and $\sigma>\frac{1}{2}$, then $P_E >\frac{1}{2}$, but we claim that $P_E$  is less than $p$ in this case. Indeed, $\frac{1}{2}(2\sigma - 1)(2p-1) + \frac{1}{2} < p$ is equivalent to $(2\sigma - 1)(2p-1) < 2p-1$, which is true since $\sigma < 1$. The latter holds because in a scenario similar to that in Section \ref{how}, typically (in particular, in the scheme in \cite{framework}), to the same Bob's public key, any private (decryption) key from the pool of all private keys can be associated with nonzero probability. In particular, there will be private keys that yield $K_B=1$, as well as those that yield $K_B=0$.

Thus, while there might be an {\it  instantiation-specific statistical attack} on Bob's public key, it will have nothing to do with emulation attack(s) suggested by the above proof of Proposition \ref{emulation}.
This also explains why the two parts (theoretical and experimental) in \cite{Panny} are completely unrelated and perhaps also why the probability $P_E$ in the experimental part of \cite{Panny} is not specified.

\end{document}